\newtheorem{theorem}{Theorem}
\newtheorem{corollary}{Corollary}  
\newtheorem{remark}{Remark}
\title{Energy Detection   of Unknown Signals over  Cascaded   Fading Channels}
\author{
Paschalis~C.~Sofotasios,~Lina~Mohjazi,~Sami~Muhaidat,~Mahmoud Al-Qutayri,~and~George~K.~Karagiannidis


\thanks{P. C. Sofotasios is  with the Department of Electronics and Communications Engineering, Tampere University of Technology, 33101 Tampere, Finland and with the Department of Electrical and Computer Engineering, Aristotle University of Thessaloniki, 54124 Thessaloniki, Greece  \, (e-mail: p.sofotasios@ieee.org)  }

\thanks{L. Mohjazi is with the Centre for Communication Systems Research, Department of Electronic Engineering, University of
Surrey, GU2 7XH, Surrey, U.K. (e-mail: l.mohjazi@surrey.ac.uk)}

\thanks{S. Muhaidat is with the Department of Electrical and Computer Engineering, Khalifa University, PO Box 127788, Abu Dhabi, UAE and with
the Centre for Communication Systems Research, Department of Electronic Engineering, University of Surrey, GU2 7XH, Guildford, U.K. (e-mail:
muhaidat@ieee.org)}

\thanks{ M. Al-Qutayri is  with the Department of Electrical and Computer Engineering, Khalifa University, P.O. Box 127788, Abu
Dhabi, UAE. (e-mail:  mqutayri@kustar.ac.ae)}

\thanks{G. K. Karagiannidis is with the Department of Electrical and Computer Engineering, Khalifa University, PO Box 127788
Abu Dhabi, UAE and with the Department of Electrical and Computer Engineering, Aristotle University of Thessaloniki, 54124 Thessaloniki, Greece \, (e-mail: geokarag@ieee.org)}
}
\begin{document}

\maketitle

\begin{abstract}
Energy detection is a favorable mechanism in several applications relating to  the  identification  of deterministic unknown signals such as in radar systems and cognitive radio communications.  The present work  quantifies the detrimental effects of  cascaded multipath fading on energy detection and investigates the corresponding performance capability. A novel analytic solution is firstly derived for a generic integral that involves a product of the  Meijer $G-$function, the Marcum $Q-$function  and   arbitrary power terms. This solution is subsequently employed in the derivation of  an exact closed-form expression for the average probability of detection of unknown signals over  $N$*Rayleigh   channels. The offered results are also  extended to the case of square-law selection, which is a relatively simple and effective diversity method.  It is shown that the detection performance  is  considerably degraded by the number of cascaded channels and that these  effects can be effectively  mitigated by a non-substantial increase of diversity branches.  
 \end{abstract}

\begin{keywords}
Energy detection, cascaded fading, diversity.
\end{keywords}

\section{Introduction}  \label{intro}

\IEEEPARstart{E}{FFECTIVE} detection of unknown signals has been  a critical research topic since  it is  typically encountered in  applications relating to RADAR,  cognitive radio and ultra-wide band systems. Energy detection (ED) has been the most widely considered detection method due to its non-coherent structure and low implementation complexity \cite{J:Alouini}  and the references therein.  Its operation is based  on the deployment of a radiometer that its  output decision is determined by comparing the received energy level with a reference energy threshold. In the context of cognitive radio, this  decision indicates the presence or absence of unknown  signals, while the overall detection capability has been typically characterized by  the corresponding probability of detection, $P_d$, and probability of false alarm, $P_f$ \cite{J:Alouini}. Since the  objective in such systems is not  information recovery, the process is solely based on the amount of its energy regardless of its other parameters, such as carrier phase information. Therefore, in the context of ED, an unknown signal is defined as a deterministic signal with an unknown form, which can be considered a sample function of a random process. However, knowledge of the spectral region to which its confined  along with the estimated SNR statistics are sufficient to allow design of suitable energy detectors  \cite{Paschalis_1}. 
\par   It is also known that fading effects create detrimental impacts on the performance of conventional and emerging communication systems \cite{Additional_1, Additional_2, Additional_3, Additional_4, Additional_5, Additional_6, Additional_7, Additional_8, Additional_9}.  Based on this and capitalizing on the fundamental contribution of \cite{J:Alouini}, numerous investigations  have been reported on the performance and behavior of ED in various practical scenarios.  Specifically,   thorough analyses  over multipath and composite   fading channels, were carried out  for both single-channel and multi-channel scenarios  in \cite{Maged, Paschalis_1, Paschalis_2, Paschalis_4, Paschalis_5, Paschalis_6, Paschalis_7, Paschalis_8} and the references therein. In the same context,  the performance over  hyper Rayleigh  conditions was analyzed in \cite{Hu}.  Nevertheless, in spite of the usefulness of the reported analyses, none of them  accounts for the case of  cascaded fading channels. The detrimental effects   in such  fading conditions are because   transmitted signals are exposed to a product of a large number of rays reflected by $N$ statistically independent scatterers. For example, it is practically demonstrated in \cite{Frolik}   that cascaded channels can accurately model the abrupt and rapid fading dynamics resulting from large mobile scatterers in inter-vehicular communications.  

Cascaded fading conditions are also encountered in   scenarios such as  multihop cooperative communications and multiple-input-multiple-output (MIMO) keyhole communications systems \cite{Karagiannidis}. In multihop relay cognitive radio networks, ED is performed by cooperative sensing, where the process of detecting the primary signal is replicated through a number of multihop sensing nodes by means  of non-regenerative relaying. Similarly, in the case of keyhole  MIMO channels, a multi-antenna secondary user employs ED to sense the spectrum of a multi-antenna primary user \cite{Zhan}. Likewise, cognitive vehicular networks (CVNs) have been proposed as an effective  method to provide adequate and robust operation in daily road traffic, and thus, vehicular nodes have to be equipped with efficient spectrum sensing capabilities \cite{Felice}.    
  
Motivated by the above, the present work quantifies the behavior  and performance of ED under  cascaded Rayleigh fading conditions for both single-channel and multi-channel scenarios. This topic was partly addressed in \cite{Ilhan} by deriving an infinite series, which was  not studied in terms of accuracy and convergence. On the contrary, in the present analysis, a novel analytic expression is firstly derived  for an infinite integral involving a product of a Meijer $G-$function, a  Marcum $Q-$function and power terms. This solution is generic and is anticipated to be useful in various analyses relating to  natural sciences and engineering. To this end, it is employed in the derivation of an exact closed-form expression for the $\overline{P}_{d}$ of ED  over $N$ cascaded Rayleigh channels. The derived  expression is subsequently employed in extending the offered results  to the case of square-law selection, which is a relatively simple and robust diversity method that exhibits improved performance.

\IEEEpubidadjcol


\section{ Channel and System Model }

\subsection{ The $N$*Rayleigh Fading  }

 It is recalled that multiplicative fading models are capable of accounting  for fading phenomena holistically \cite{Karagiannidis}.  Physically,  these models consider received signals generated by the product of a
large number of rays reflected via $N$ scatterers \cite{Karagiannidis, Frolik}. Based on this, a generic  cascaded fading model, so called $N$*Nakagami$-m$,  was proposed in \cite{Karagiannidis} that is constructed by the product of $N$ statistically independent but not necessarily identically distributed Nakagami$-m$ random variables.  When $m=1$, this model reduces to the $N$*Rayleigh distribution  with probability density function (PDF)
 
\begin{equation} \label{Cascaded}
p_{\gamma}(\gamma) =  \frac{1}{\gamma } G_{0, N}^{N, 0} \left[ \frac{\gamma}{\overline{\gamma}} \left| ^{  \, \,  \quad \, \, \, \,  \,   -     }_{\underbrace{1, 1, \cdots, 1}_{N}} \right. \right]  
\end{equation}
where $\gamma$ and $\overline{\gamma}$ denote the instantaneous and average signal-to-noise ratio (SNR), respectively and  $G(\cdot)$  is the Meijer $G$-function. 
For the special case that $N=2$, $N$*Rayleigh  reduces to the double Rayleigh fading model, which has been used to model fading conditions in realistic communication scenarios, including mobile transmitter and receivers  \cite{Frolik, Hu}. It is also noted that cascaded fading models  have been  used in modeling the keyhole channel in multi-antenna systems \cite{Karagiannidis}.   

\subsection{Energy Detection of Unknown Signals }

The detection  of unknown signals can be  modeled as a binary hypothesis-testing problem, where $H_0$ and $H_1$ correspond to the cases that a  signal  is absent or present, respectively.  Based on this, the received signal   can be expressed as \cite{J:Alouini}
 
\begin{equation}\label{eq:System_model_1}
y\left(t\right) = \left\{ {\begin{array}{*{20}{c}}
{n\left(t\right)}&{:{H_0}}\\
{hs\left(t\right) + n\left(t\right)}&{:{H_1}}
\end{array}} \right. 
\end{equation}
where $h$ and ${s\left( t \right)}$ denote the wireless channel gain and  the transmitted information signal  with average power $E_s$, respectively, and  ${n\left( t \right)}$ is the  zero-mean complex additive white Gaussian noise (AWGN) with single-sided power spectral density $N_0$. The received signal is firstly band-pass filtered at bandwidth $B$ (Hz) and the output of the filter is subsequently squared and integrated over time duration $T$. This generates the test statistic $Y$ which is typically formulated as \cite{J:Alouini}
 
 \begin{equation}\label{eq:ED_1}
Y \sim \left\{ {\begin{array}{*{20}{c}}
{\chi _{2u}^2}&{:{H_0}}\\
{\chi _{2u}^2\left( {2\gamma } \right)}&{:{H_1}}
\end{array}} \right. 
\end{equation}
where $\chi _{2u}^2$ is  a central chi-square distribution with $2u$ degrees of freedom with $u$ denoting the corresponding time-bandwidth product. Likewise, ${\chi _{2u}^2\left( {2\gamma } \right)}$ is  a non-central chi-square distribution with the same degrees of freedom and a non-centrality parameter $2\gamma$, with $\gamma  = {\left| h \right|^2} {{E_s}}/{{N_0}}$ denoting the  instantaneous SNR of the target signal. Finally, the test statistic $Y$ is compared with an energy threshold $\rm{\lambda }$, which  determines the absence or presence of the signal under test  \cite{J:Alouini}. In the case of AWGN, the  probability of false alarm and probability of detection are represented  as  follows 

\begin{equation}
{P_f} = {\rm Pr}\left( {Y > \lambda |{H_0}} \right) = G\left( {u, \frac{\lambda}{2}} \right)
\end{equation}
and 

\begin{equation}
{P_d} = {\rm Pr}\left( {Y > \lambda |{H_1}} \right) = {Q_u}\left( {\sqrt {2\gamma } ,\sqrt \lambda  } \right)
\end{equation}

%
%
respectively, where   $G\left( a, b\right)$ and ${Q_u}\left( {a,b} \right)$ are  the regularized upper incomplete gamma function and the   Marcum $Q-$function, respectively.

\section{Energy Detection  over $N$*Rayleigh Channels }
\label{analytic}

Meijer$-G$ and Marcum$-Q$ functions are particularly important in wireless communications. However, to the best of the authors' knowledge, no tabulated  solutions for integrals that involve a product of a Meijer $G-$function, a Marcum $Q-$function and arbitrary power terms exist in the literature.


\begin{theorem} 
For $t \in \mathbb{R} $,  $u, b, c \in \mathbb{R}^{+}$, $m, n, p, q \in \mathbb{N}$, and $G(\cdot ,\cdot)$ denoting the bivariate Meijer G-function, equation \eqref{solution}  is valid for the   integral: 
 
\begin{equation}  \label{integral_problem}
\mathcal{I} = \int_{0}^{\infty} x^{t-1}  Q_{u}(b \sqrt{x}, c)   \,   G_{p, q}^{m, n} \left( k x \left| ^{a_{1}, \cdots, a_{n}, \,  a_{n+1}, \cdots, a_{p} }_{b_{1}, \cdots, b_{m}, \,  b_{m+1}, \cdots, b_{q}}\right.  \right) {\rm d}x.  
\end{equation}

\begin{equation} \label{solution}
\mathcal{I} = -   \frac{2k}{b^{2}} G_{1, 0: 1, 3: p+1, q+1}^{0,1: 1, 0: m, n+1} \left( ^{0}_{-} \left|^{\quad 1/2}_{0, -u, 1/2}\right.  \left|  ^{1, t + a_{1}, \cdots, t + a_{n}, \,  t + a_{n+1}, \cdots, t + a_{p} }_{t + b_{1}, \cdots, t + b_{m}, \, 0, \, t +  b_{m+1}, \cdots, t + b_{q}} \right| \frac{c}{2}, \frac{2k}{b^{2}}    \right). 
\end{equation}
\end{theorem}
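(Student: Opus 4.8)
The plan is to reduce $\mathcal I$ to a double Mellin--Barnes contour integral and then recognise that object as the extended generalised bivariate Meijer $G$-function appearing in \eqref{solution}. I would begin by replacing each of the two special functions in the integrand of \eqref{integral_problem} by a contour-integral representation. For the Meijer $G$-function $G_{p,q}^{m,n}(kx\,|\,\cdot)$ this is immediate: its defining Mellin--Barnes integral over a variable $w$ contributes the kernel $(kx)^{-w}$ together with the ratio of gamma functions built from the parameter lists $(a_i)$ and $(b_j)$. For the Marcum $Q$-function $Q_u(b\sqrt{x},c)$ I would invoke a Mellin--Barnes representation whose contour variable $s$ is conjugate to the second argument, so that the $c$-dependence is isolated as a power $(c/2)^{\pm s}$ multiplied by the gamma ratio $\Gamma(s)/[\Gamma(1+u-s)\,\Gamma(\tfrac12-s)\,\Gamma(\tfrac12+s)]$ --- exactly the block $G_{1,3}^{1,0}$ with upper parameter $\tfrac12$ and lower parameters $0,-u,\tfrac12$ that one reads off \eqref{solution}. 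The decisive feature of this representation is that the dependence on the first argument enters only through the factor $e^{-b^2x/2}$ (times $s$-dependent powers of $b^2x/2$), which is what will make the subsequent $x$-integration converge.

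With both representations in place, I would substitute them into \eqref{integral_problem}, interchange the order of the $x$-integration with the two contour integrations --- the step that must be justified by choosing the contours $L_s,L_w$ so that the resulting triple integral is absolutely convergent --- and carry out the inner integral over $x$ first. Because the only surviving $x$-dependence is of the form $x^{t-1}\,x^{-w}\,x^{g(s)}\,e^{-b^2x/2}$, the inner integral is an elementary gamma integral $\int_0^\infty x^{\mu-1}e^{-b^2x/2}\,dx=\Gamma(\mu)(b^2/2)^{-\mu}$. This is the key mechanism: it (i) produces the gamma factor that couples $s$ and $w$, i.e.\ the outermost coupling block (orders $1,0$ below and $0,1$ above, single parameter $0$) of \eqref{solution}; (ii) shifts every parameter of the original $G$-function by $t$, turning $(a_i,b_j)$ into $(t+a_i,\,t+b_j)$ and introducing the extra upper/lower entries $1$ and $0$ of the third block; and (iii) combines the powers $k^{-w}$ and $(b^2/2)^{w}$ into the single argument $2k/b^2$, while the residual constants $(b^2/2)^{-t}$ and $k$ assemble the prefactor $-2k/b^2$, sign included.

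What remains is a double contour integral in $(s,w)$ whose integrand is a product of three gamma ratios --- one depending on $s$ alone (the Marcum block), one on $w$ alone (the shifted Meijer block), and one on $s+w$ (the coupling block) --- against $(c/2)^{s}(2k/b^2)^{w}$. I would finish by matching this, term for term in the parameter bookkeeping, with the definition of the bivariate Meijer $G$-function, thereby reading off the orders $1,0:1,3:p+1,q+1$ and $0,1:1,0:m,n+1$ together with the three parameter rows displayed in \eqref{solution}.

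I expect the main obstacle to be twofold. First, establishing and correctly normalising the Mellin--Barnes representation of the two-argument Marcum $Q$-function --- in particular pinning down the half-integer factors $\Gamma(\tfrac12\pm s)$ and the sign of the power of $c/2$ --- since this is the one non-tabulated ingredient and every constant in the final answer is inherited from it. Second, the convergence and interchange bookkeeping: one must exhibit a common vertical strip for $\mathrm{Re}(s)$ and $\mathrm{Re}(w)$ on which the inner gamma integral is valid (so that $\mathrm{Re}(\mu)>0$) and on which the double Mellin--Barnes integral defines the bivariate $G$-function, and then verify that the contours can indeed be placed there under the stated hypotheses $u,b,c\in\mathbb{R}^{+}$ and $t\in\mathbb{R}$. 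The final parameter-matching is lengthy but mechanical once these two points are secured.
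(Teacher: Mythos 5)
Your overall architecture (double Mellin--Barnes representation, interchange of integrals, inner gamma integral, then matching against the definition of the bivariate $G$-function) is sound in spirit --- indeed it is essentially how the tabulated result that the paper ultimately invokes is itself established --- but it hinges on an ingredient that does not exist: a single-contour Mellin--Barnes representation of $Q_{u}(b\sqrt{x},c)$ in which the $c$-dependence enters as a pure power $(c/2)^{\pm s}$ and the $x$-dependence as $e^{-b^{2}x/2}$ times $s$-dependent powers. To see that no such representation can hold, use the canonical series of the Marcum $Q$-function,
\begin{equation*}
e^{b^{2}x/2}\,Q_{u}(b\sqrt{x},c)\;=\;\sum_{n=0}^{\infty}\frac{(b^{2}x/2)^{n}}{n!}\,\frac{\Gamma\!\left(u+n,\,c^{2}/2\right)}{\Gamma(u+n)}.
\end{equation*}
If your representation existed, the left-hand side would be a single Mellin--Barnes integral whose residue expansion forces every Taylor coefficient in $x$ to be a constant times a fixed power of $c$; but the true coefficients are normalized upper incomplete gamma functions of $c^{2}/2$, which are transcendental in $c$. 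Equivalently, the Mellin transform of $Q_{u}$ (or of $1-Q_{u}$) with respect to either argument is a gamma ratio multiplied by a confluent hypergeometric (respectively a $G_{2,3}^{1,1}$) function of the \emph{other} argument --- never a pure power. The block $\Gamma(s)/[\Gamma(1+u-s)\Gamma(\tfrac{1}{2}-s)\Gamma(\tfrac{1}{2}+s)]$ that you read off \eqref{solution} is in fact the Mellin kernel of $\tfrac{1}{\pi}z^{-u/2}I_{u}(2\sqrt{z})$, i.e.\ of the modified Bessel function appearing in the derivative $\partial Q_{u}(a,c)/\partial a = c\,(c/a)^{u-1}e^{-(a^{2}+c^{2})/2}I_{u}(ac)$, not in $Q_{u}$ itself; you have attributed the correct block to the wrong function.

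The missing idea is precisely the paper's first step: integrate by parts, pairing $x^{t-1}G_{p,q}^{m,n}(kx)$ with its antiderivative $G_{p+1,q+1}^{m,n+1}$ --- this, not the inner gamma integral, is where the shift of all parameters by $t$ and the extra entries $1$ and $0$ originate --- and moving the derivative onto the Marcum $Q$. Only the derivative is elementary (a power of $x$ times $e^{-b^{2}x/2}$ times $I_{u}(bc\sqrt{x})$), and only then does the integrand acquire exactly the structure your plan requires, after which one may MB-expand or, as the paper does, quote the tabulated integral for a product of three Meijer $G$-functions. Your bookkeeping also fails on its own terms: the stated answer contains both a coupling gamma factor from the first block and a factor $1/w$ coming from the extra pair $1$, $0$ in the third block, whereas your single inner integral $\int_{0}^{\infty}x^{\mu-1}e^{-b^{2}x/2}\,{\rm d}x$ can produce only one gamma factor; the $1/w$ is the fingerprint of the integration-by-parts antiderivative that your route omits. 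A further caution should you repair the proof along these lines: since $Q_{u}(b\sqrt{x},c)\to 1$ as $x\to\infty$, the boundary term at infinity in the integration by parts need not vanish (the paper's own claim that it does is, in general, too quick), so it must be tracked rather than discarded.
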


\begin{proof}
By setting $y = kx$ in \eqref{integral_problem}  it immediately  follows that 
 
\begin{equation} \label{integral_1}
\mathcal{I} = \int_{0}^{\infty}   \frac{ Q_{u}\left(\frac{b \sqrt{y}}{\sqrt{k}}, c \right) }{  y^{1 - t} k^{t}}     \,    G_{p, q}^{m, n} \left( y \left| ^{a_{1}, \cdots, a_{n}, \,  a_{n+1}, \cdots, a_{p} }_{b_{1}, \cdots, b_{m}, \,  b_{m+1}, \cdots, b_{q}}\right.  \right) {\rm d}y. 
\end{equation}
The above integral is improper; as a result, by integrating  by parts with respect to the power term and the Meijer $G-$function using \cite[eq. (07.34.21.0002.01)]{Wolfram}, one obtains 
 
\begin{equation} \label{integral_2}
\mathcal{I} =  \lim_{y \to \infty}   \frac{   f(y)    }{k^{t} } Q_{u}\left(\frac{b \sqrt{y}}{\sqrt{k}}, c \right) - \lim_{y \to 0}   \frac{   f(y)    }{k^{t} } Q_{u}\left(\frac{b \sqrt{y}}{\sqrt{k}}, c \right)   -    \frac{  1    }{k^{t} }  \int_{0}^{\infty} f(y)  \left\lbrace \frac{{\rm d}}{{\rm d} y} Q_{u}\left(\frac{b \sqrt{y}}{\sqrt{k}}, c \right) \right\rbrace  {\rm d}y 
\end{equation}
where

\begin{equation} \label{integral_3}
f(y) =  G_{p+1, q+1}^{m, n+1} \left( y \left| ^{1, t + a_{1}, \cdots, t + a_{n}, \,  t + a_{n+1}, \cdots, t + a_{p} }_{t + b_{1}, \cdots, t + b_{m}, \, 0, \, t +  b_{m+1}, \cdots, t + b_{q}}\right.  \right).  
\end{equation}
Evidently, both limits in \eqref{integral_2} approach zero, yielding
 
\begin{equation} \label{integral_4}
\mathcal{I} =   -    \frac{  1    }{k^{t} }  \int_{0}^{\infty} f(y)  \left\lbrace \frac{{\rm d}}{{\rm d} y} Q_{u}\left(\frac{b \sqrt{y}}{\sqrt{k}}, c \right) \right\rbrace  {\rm d}y.  
\end{equation}
The derivative of the Marcum $Q-$function with respect to $y$ can be determined with the aid of  \cite{Paschalis_3}. Hence, by performing the necessary change of variables, substituting in \eqref{integral_4} along with \eqref{integral_3} and carrying out basic manipulations it follows that
 
\begin{equation}  \label{integral_5}
\mathcal{I} = -   \int_{0}^{\infty}  \frac{  G_{p+1, q+1}^{m, n+1} \left( y \left| ^{1, t + a_{1}, \cdots, t + a_{n}, \,  t + a_{n+1}, \cdots, t + a_{p} }_{t + b_{1}, \cdots, t + b_{m}, \, 0, \, t +  b_{m+1}, \cdots, t + b_{q}}\right.  \right)  }{ 2  c^{-u} k^{t + 1 - \frac{ u}{2} } b^{u - 2} e^{\frac{c^{2}}{2}}   y^{\frac{u }{2}}e^{\frac{y b^{2}}{2k} } \left[  I_{u}\left( b c \sqrt{\frac{y}{k}} \right) \right]^{-1} }      {\rm d}y  
\end{equation} 
where $I_{n}(x)$ is the modified Bessel function of the first kind. 
By then expressing the $\exp(\cdot)$ and $I_{n}(\cdot)$ functions   in terms of the Meijer $G-$function, equation  \eqref{integral_6} is deduced

\begin{equation} \label{integral_6}
\mathcal{I} = 1 -   \frac{\pi b^{2} c^{2u} e^{-\frac{c^{2}}{2}}}{2^{u + 1} k^{t + 1}} \int_{0}^{\infty}  G^{1, 0}_{0,  1} \left(  \frac{b^{2} y}{2k}  \left|^{    {-} }_{0} \right. \right)   \, G^{1, 0}_{1,  3} \left( \frac{b^{2} c^{2} y}{4 k} \left|^{    \quad 1/2    }_{0, -u, 1/2} \right. \right)   G_{p+1, q+1}^{m, n+1} \left( y \left| ^{1, t + a_{1}, \cdots, t + a_{n}, \,  t + a_{n+1}, \cdots, t + a_{p} }_{t + b_{1}, \cdots, t + b_{m}, \, 0, \, t +  b_{m+1}, \cdots, t + b_{q}} \right. \right)  {\rm d}y.   
\end{equation}
Importantly, this integral can be expressed in closed-form with the aid of \cite[eq. (07.34.21.0081.01)]{Wolfram}. Thus, by performing the necessary change of variables and substituting in \eqref{integral_6} yields (\ref{solution}), which completes the proof.  
\end{proof}

\begin{remark}
The Meijer $G$-function of one and two variables
 is used increasingly in wireless communications and  can be evaluated numerically with the aid of  the  algorithm  in \cite[Table II]{Yilmaz}. 
It is also algebraically related to the respective Fox $H-$functions  and thus \eqref{solution} can be also generically represented in terms of the  Fox $H-$function for the special case that  $ \mathcal{C} = 1$. 
\end{remark}

\subsection{Average Probability of Detection in $N$*Rayleigh Fading }

As already mentioned, the generic solution in \eqref{solution} can be used in numerous applications in wireless communications. 

\begin{corollary}
For $u, \lambda, \overline{\gamma} \in \mathbb{R}^{+}$ and $N \in \mathbb{N}$, the following expression holds for the average probability of detection over $N$*Rayleigh  fading channels with $G^{-1}(\cdot, \cdot)$ denoting the inverse regularized   incomplete gamma function.

\begin{equation} \label{Average_Pd}
\overline{P}_{d} =   {\huge  G}_{2,1: 1, 3: 1, N+1}^{0,1: 1, 0: N, 1} \left( ^{\frac{u}{2}, \frac{u-1}{2}}_{\, \, \, \,  \frac{u-1}{2}} \left|^{\quad \frac{1}{2}}_{0, -u, \frac{1}{2}}\right.  \left|  ^{\, \quad  \,  1 }_{1, \cdots , 1, 0, 1} \right| - G^{-1}(u, P_f), - \frac{1}{\overline{\gamma}}    \right). 
\end{equation}
\end{corollary}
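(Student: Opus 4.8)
The plan is to obtain $\overline{P}_{d}$ by averaging the AWGN detection probability $P_{d}=Q_{u}(\sqrt{2\gamma},\sqrt{\lambda})$ over the $N$*Rayleigh SNR density in \eqref{Cascaded}, namely
\[
\overline{P}_{d}=\int_{0}^{\infty}Q_{u}\!\left(\sqrt{2\gamma},\sqrt{\lambda}\right)\frac{1}{\gamma}\,G_{0,N}^{N,0}\!\left[\frac{\gamma}{\overline{\gamma}}\left|{}^{\,-}_{1,\cdots,1}\right.\right]{\rm d}\gamma ,
\]
and to recognise that this is exactly the integral $\mathcal{I}$ treated in Theorem~1 through \eqref{integral_problem}. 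Matching the integrand factor by factor fixes every free parameter: the Marcum term forces $b=\sqrt{2}$ and $c=\sqrt{\lambda}$, the power $\gamma^{-1}$ forces $t=0$, and the density's $G$-function forces $k=1/\overline{\gamma}$, $m=N$, $n=0$, $p=0$, $q=N$, with all lower parameters equal to unity and no upper parameters. Under these choices the prefactor $-2k/b^{2}$ reduces to $-1/\overline{\gamma}$, the upper list of the inner block collapses to the single entry $1$, and its lower list collapses to $N$ unit entries followed by the extra $0$ supplied by \eqref{solution}.

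Next I would substitute these values into \eqref{solution} to obtain an intermediate bivariate-$G$ representation, and then remove the decision threshold in favour of the false-alarm probability. Because in AWGN $P_{f}=G(u,\lambda/2)$, inversion yields $\lambda/2=G^{-1}(u,P_{f})$, so that $\lambda=2\,G^{-1}(u,P_{f})$ and $c=\sqrt{2\,G^{-1}(u,P_{f})}$; this expresses the argument carrying $c$, together with the accompanying constants, purely through $G^{-1}(u,P_{f})$, as required by the right-hand side of \eqref{Average_Pd}.

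The hard part is to collapse this intermediate expression into the single, compact bivariate $G$-function displayed in \eqref{Average_Pd}. A verbatim substitution into \eqref{solution} returns a coupling block of order $G_{1,0}^{0,1}$, the positive arguments $(c/2,\,2k/b^{2})$, and an external factor $-1/\overline{\gamma}$, whereas \eqref{Average_Pd} carries a coupling block of order $G_{2,1}^{0,1}$ with parameters $(u/2,(u-1)/2;(u-1)/2)$ and the negated arguments $(-G^{-1}(u,P_{f}),-1/\overline{\gamma})$. Reconciling the two must be done at the level of the double Mellin--Barnes integral that defines the bivariate $G$: I would absorb the external $-1/\overline{\gamma}$ into the $r$-contour by a parameter shift, which simultaneously reorders which lower entries of the inner block are numerator- versus denominator-type, and I would convert the dependence on $c=\sqrt{\lambda}$ into a dependence on $\lambda$ itself --- the quadratic step that sends $c/2$ to $-c^{2}/2=-G^{-1}(u,P_{f})$ --- by means of the Gamma-function duplication and reflection identities. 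Those identities are precisely what manufacture the extra $\Gamma$-factors that promote the coupling block from $G_{1,0}^{0,1}$ to $G_{2,1}^{0,1}$. Checking that this bookkeeping closes consistently, reproducing the orders $G_{2,1:1,3:1,N+1}^{0,1:1,0:N,1}$ and the exact parameter strings of \eqref{Average_Pd}, is the delicate step; everything preceding it is routine substitution.
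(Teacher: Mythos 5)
Your proposal is correct and follows essentially the same route as the paper: substitute the $N$*Rayleigh PDF into the averaging integral $\overline{P}_{d}=\int_{0}^{\infty}Q_{u}(\sqrt{2\gamma},\sqrt{\lambda})\,p_{\gamma}(\gamma)\,{\rm d}\gamma$, recognize it as the $t=0$, $b=\sqrt{2}$, $c=\sqrt{\lambda}$, $k=1/\overline{\gamma}$, $m=q=N$, $n=p=0$ special case of Theorem~1, and eliminate $\lambda$ via $\lambda=2\,G^{-1}(u,P_{f})$. In fact your write-up is more forthcoming than the paper's own proof, which compresses the entire reconciliation between the raw output of Theorem~1 (coupling block $G_{1,0}^{0,1}$, arguments $c/2$ and $2k/b^{2}$, external factor $-1/\overline{\gamma}$) and the displayed form of the corollary (coupling block $G_{2,1}^{0,1}$ with parameters $u/2,(u-1)/2;(u-1)/2$ and negated arguments) into the single phrase ``performing the necessary variable transformation,'' i.e.\ exactly the Mellin--Barnes-level bookkeeping you flag as the delicate step.
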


\begin{proof}
It is recalled that $\overline{P}_d$ is obtained by averaging  $P_d$ over the statistics of the corresponding fading channel, namely,  \cite{J:Alouini}
 
\begin{align}  \label{Pd_Average}
\overline{P}_{d} &=  \int_{0}^{\infty} Q_{u}(\sqrt{2 \gamma}, \sqrt{\lambda}) \, p_{\gamma}(\gamma) \,  {\rm d} \gamma  \\
&   = \int_{0}^{\infty} Q_{u}\left(\sqrt{2 \gamma}, \sqrt{2 G^{-1}(u, P_f)}\right) \, p_{\gamma}(\gamma) \,  {\rm d} \gamma.  
\end{align}
To this effect,  by substituting  \eqref{Cascaded} in \eqref{Pd_Average} it follows that 
 
\begin{equation}  \label{Pd_Cascaded}
\overline{P}_{d} =  \int_{0}^{\infty} \frac{1}{\gamma } \,  Q_{u}(\sqrt{2 \gamma}, \sqrt{\lambda})  \,   G_{0, N}^{N, 0} \left[ \frac{\gamma}{\overline{\gamma}} \left| ^{  \, \,  \quad \, \, \,   \, \, -     }_{\underbrace{1, 1, \cdots, 1}_{N}} \right.  \right]     {\rm d} \gamma.   
\end{equation}
The above expression can be evaluated  using  Theorem $1$ since the involved integral is a special case of \eqref{integral_problem}. Hence, by performing the necessary variable transformation in \eqref{solution} and substituting in \eqref{Pd_Cascaded} yields \eqref{Average_Pd}, which completes  the proof. 
\end{proof}


\subsection{Square-Law Selection (SLS)}

Square-law selection is  an effective and relatively simple diversity method \cite{J:Alouini}. Its operation is based on selecting the branch, among $L$ branches, with the maximum decision statistic, $y_{\rm SLS} = \max (y_{1}, \cdots , y_{L})$. For  ED over fading channels, the corresponding average probability of detection  is given by 
 
\begin{equation} \label{diversity}
\overline{P}_{d, {\rm SLS}} \triangleq  1 - \prod_{i = 1}^{L} \{1 - \overline{P}_{d}(\overline{\gamma}_i)\}. 
\end{equation}
\begin{corollary}
For $u, \lambda, \overline{\gamma} \in \mathbb{R}^{+}$ and $N \in \mathbb{N}$, the average probability of detection over $N$*Rayleigh  channels with square-law selection is expressed as

\begin{equation} \label{Average_Pd_SLS} 
\overline{P}_{d, {\rm SLS}} =  1 - \prod_{i = 1}^{L} \left\lbrace 1 - {\huge  G}_{2,1: 1, 3: 1, N+1}^{0,1: 1, 0: N, 1} \left( ^{\frac{u}{2}, \frac{u-1}{2}}_{\, \, \, \,  \frac{u-1}{2}} \left|^{\quad \frac{1}{2}}_{0, -u, \frac{1}{2}}\right.  \left|  ^{\, \quad  \,  1 }_{1, \cdots , 1, 0, 1} \right| - \frac{\lambda}{2}, - \frac{1}{\overline{\gamma}_{i}}    \right) \right\rbrace . 
\end{equation}
\end{corollary}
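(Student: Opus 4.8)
The plan is to obtain \eqref{Average_Pd_SLS} as an immediate specialization of the preceding corollary, by inserting the single-channel closed form \eqref{Average_Pd} into the definition of square-law selection in \eqref{diversity}. Recall that under SLS the receiver retains the branch with the largest decision statistic, $y_{\mathrm{SLS}} = \max(y_1,\dots,y_L)$, so that, under $H_1$, detection fails only when \emph{every} branch falls below the threshold; assuming the $L$ diversity branches fade independently, this factorizes the overall average detection probability exactly as in \eqref{diversity}, namely $\overline{P}_{d,\mathrm{SLS}} = 1 - \prod_{i=1}^{L}\{1 - \overline{P}_d(\overline{\gamma}_i)\}$, where $\overline{P}_d(\overline{\gamma}_i)$ is the marginal average probability of detection on the $i$-th branch. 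This factorization is taken as given from \cite{J:Alouini} and is the only point at which branch independence enters the argument.

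First I would observe that each branch sees an $N$*Rayleigh channel governed by the same PDF \eqref{Cascaded}, differing only through its own average SNR $\overline{\gamma}_i$. Consequently the marginal $\overline{P}_d(\overline{\gamma}_i)$ is precisely the quantity evaluated in the previous corollary, with $\overline{\gamma}$ replaced by $\overline{\gamma}_i$. Applying \eqref{Average_Pd} branch-by-branch therefore reproduces the bivariate Meijer $G$-function with identical orders and parameter lists, the only change being that the second argument $-1/\overline{\gamma}$ becomes $-1/\overline{\gamma}_i$.

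Next I would reconcile the two admissible forms of the detection threshold. From the averaging step in \eqref{Pd_Average} one has $\sqrt{\lambda} = \sqrt{2\,G^{-1}(u,P_f)}$, hence $\lambda/2 = G^{-1}(u,P_f)$. Thus the first Meijer $G$ argument $-G^{-1}(u,P_f)$ appearing in \eqref{Average_Pd} may be written equivalently as $-\lambda/2$. Substituting the resulting per-branch expression into the product in \eqref{diversity} then yields \eqref{Average_Pd_SLS} directly, completing the derivation.

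The main obstacle here is essentially bookkeeping rather than analysis: the statement follows by substitution once Corollary~1 is in hand, so no new integral identity is required. The only points demanding care are verifying that the independence of the diversity branches legitimately produces the product form \eqref{diversity}, and applying the threshold identity $\lambda/2 = G^{-1}(u,P_f)$ consistently so that every argument of the bivariate Meijer $G$-function matches those of the single-channel corollary, with only the per-branch SNR $\overline{\gamma}_i$ distinguishing the factors.
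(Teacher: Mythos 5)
Your proposal is correct and follows essentially the same route as the paper, whose entire proof is the one-line substitution of \eqref{Average_Pd} into \eqref{diversity}. Your extra step reconciling the two threshold forms via $\lambda/2 = G^{-1}(u,P_f)$ is a useful clarification of a notational discrepancy (the first Meijer $G$ argument appears as $-G^{-1}(u,P_f)$ in \eqref{Average_Pd} but as $-\lambda/2$ in \eqref{Average_Pd_SLS}) that the paper leaves implicit.
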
  

\begin{proof}
The proof follows by substituting \eqref{Average_Pd} into \eqref{diversity}. 
\end{proof}

\begin{figure}[ht]
\centering

\subfigure[]{
   \includegraphics[width =15cm, height = 12cm] {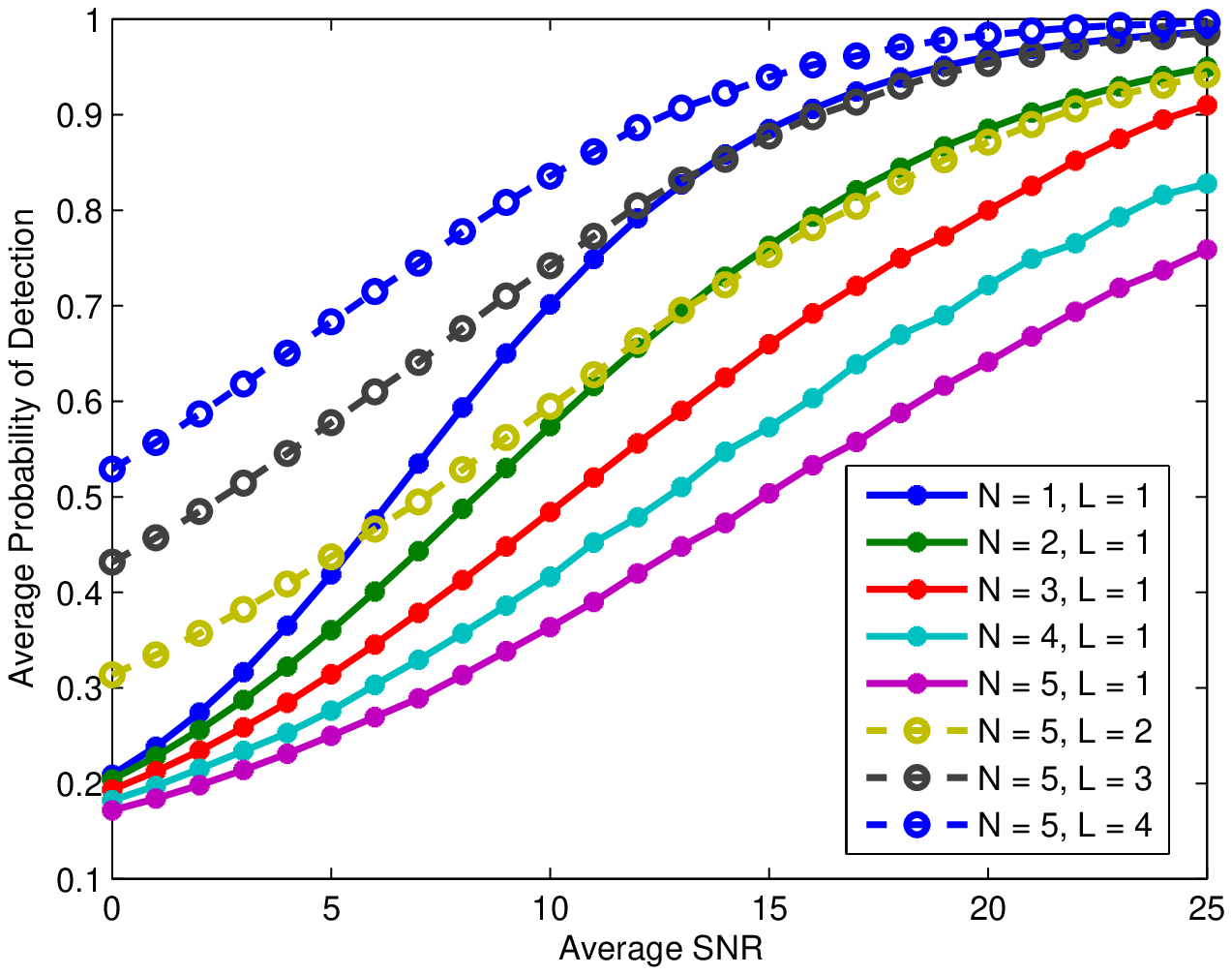}
   \label{fig:subfig1}
 }

 \subfigure[]{
   \includegraphics[width = 15cm, height = 12cm] {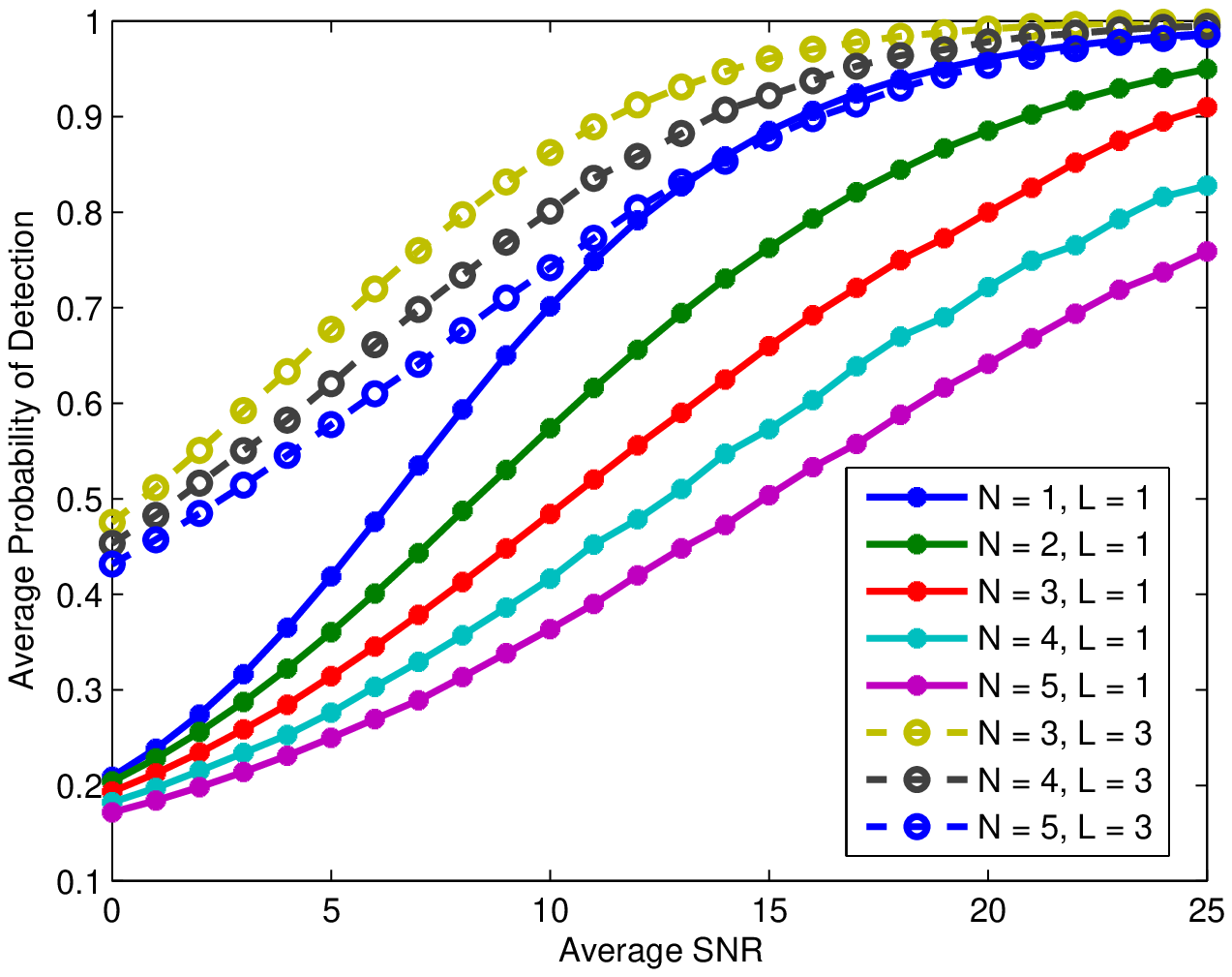}
   \label{fig:subfig2}
 }

\label{myfigure1}
\caption{$\overline{P}_d$ vs $\overline{\gamma}$ for different $N$ and $L$ with $u = 5$ and $P_f = 0.1$. }
\end{figure}

\section {Numerical Results}
\label{Results}
The offered results are used in analyzing the behavior and performance of ED based spectrum sensing  over $N$*Rayleigh fading channels for   single-channel and multi-channel cases. This enables us to quantify the corresponding detrimental effects and propose simple and efficient compensation methods. 

\begin{figure}[ht]
\centering
   \includegraphics[width =15cm, height =12cm] {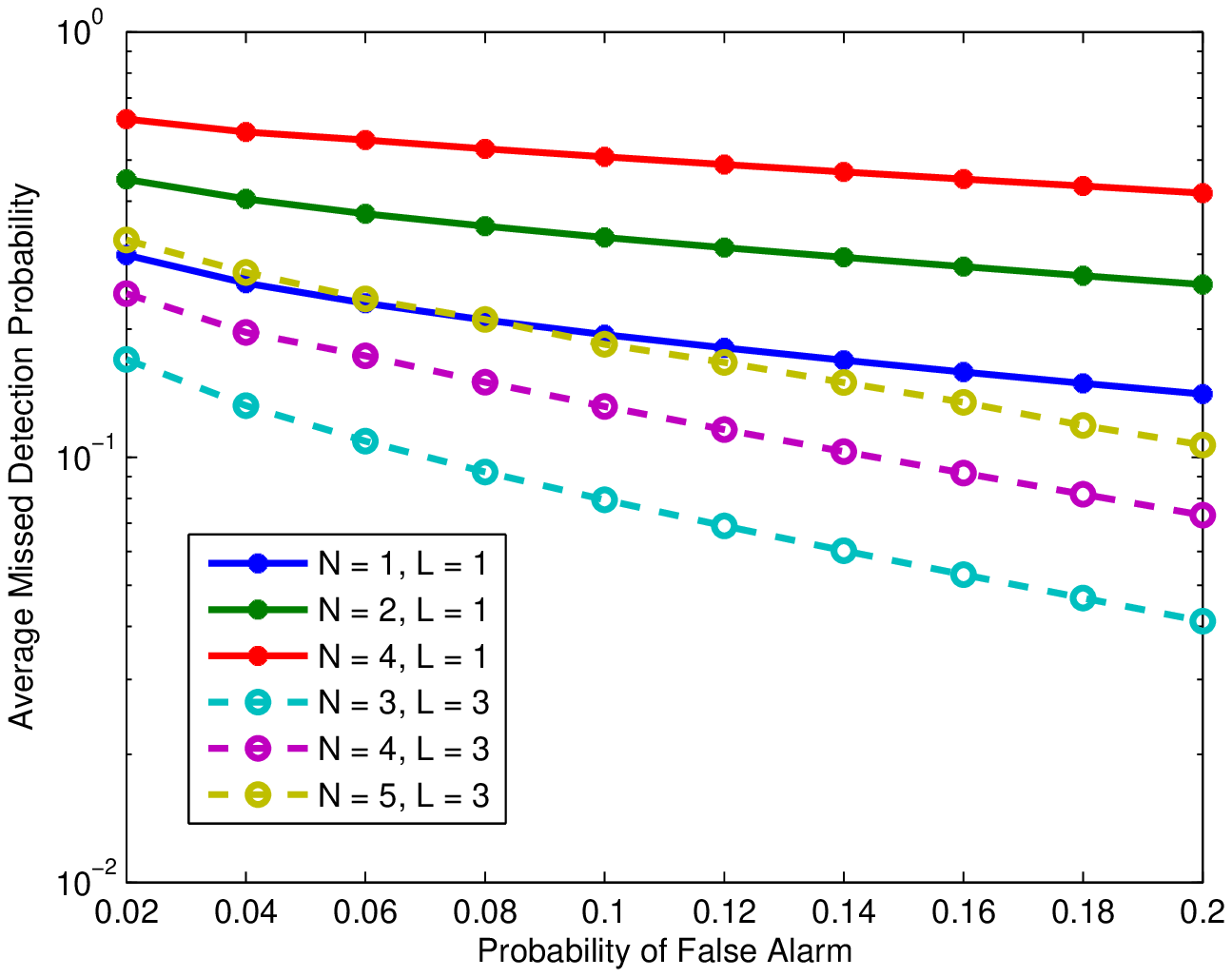}
 \label{myfigure1}
\caption{ ROC curve for different $N$ and $L$ with $u = 4$ and $\overline{\gamma} = 12$dB.}
\end{figure}

Fig.  $1$ demonstrates the average probability of detection as a function of the average SNR for     different values of $N$ and $L$ and for  $u = 5$ and $P_{f} = 0.1$, which is considered realistic for most practical applications, such as spectrum sensing.  The  value of $\overline{P}_d$ is shown to reduce  considerably as the number of cascaded channels increases. For example, for $\overline{\gamma} = 15$dB the $\overline{P}_d$ deviation between $N = 1$ and $N=3$ is   $31 \%$ whereas, that between $N=3$ and $N = 5$ for $\overline{\gamma} = 20$dB is $28 \%$.  These cases correspond to deviations that   render the ED capability inadequate since the achieved $\overline{P}_d$ is practically low even at the high SNR regime.   However, it is noticed that employing SLS  provides considerable  performance compensation even for a  small number of diversity branches. Indicatively, it is shown that  when $N=5$, the $\overline{P}_{d}$ increases substantially  for any extra added branch $L$.  This is also illustrated in the receiver operating characteristics (ROC) curve in  Fig. 2, where the value of $\overline{P}_{m}$ for $N = 5$ and $L=3$ is comparable to that for  $N = 1$ and $L=1$. As a result, the detector appears to  perform  better  when   $L\geq 4$, than in the respective single Rayleigh fading scenario. Furthermore,  the overall performance benefits of SLS are more notable in the moderate and low SNR regimes where the single channel scenario performs quite poorly.  \\
\indent 
Finally, unlike the single-channel case where increasing $N$ degrades the performance substantially, a slight degradation is observed as $N$ increases  when adopting SLS.    Overall, in  both low and high average SNR values, it is observed that considering in practice a  $3-$branch  SLS receiver can sufficiently overcome the severe  degradation by  cascaded  fading effects.

\section{Conclusion} \label{conc}

This paper quantified the effects of cascaded fading on energy detection.  A generic analytic solution was firstly derived for an integral that involves a product of a Marcum $Q-$function, a Meijer $G-$function and a power term. This solution was subsequently employed in the derivation of closed-form expressions for the average probability of detection over $N$ cascaded Rayleigh fading channels, which was then extended to the case of square-law selection.  It was shown that  the involved number of cascaded channels affect considerably the achieved performance. However, the corresponding  degradation can be effectively compensated with the aid of square-law selection since it was shown that  three or more branches are sufficient for achieving   comparable or better performance than conventional detection over   Rayleigh fading conditions.

\bibliographystyle{IEEEtran}
\thebibliography{99}

\bibitem{J:Alouini}
F. F. Digham, M. S. Alouini, and M. K. Simon,
``On the energy detection of unknown signals over fading channels," 
\emph{IEEE Trans. Commun}. vol. 55, no. 1, pp. 21${-}$24, Jan. 2007.

\bibitem{Paschalis_1} 
P. C. Sofotasios, M. Valkama, Yu. A. Brychkov, T. A. Tsiftsis, S. Freear, and G. K. Karagiannidis, ``Analytic solutions to a Marcum $Q-$function-based integral and application in energy detection,''  in CROWNCOM `14, Oulu, Finland, June 2014, pp. 260$-$265.

\bibitem{Additional_3}
P. C. Sofotasios, T. A. Tsiftsis, K. Ho-Van, S. Freear, L. R. Wilhelmsson, and M. Valkama, 
``The $\kappa-\mu$/inverse-Gaussian composite statistical distribution in RF and FSO wireless channels,''
\emph{in IEEE VTC '13 - Fall}, Las Vegas, USA,  Sep. 2013, pp. 1$-$5.

\bibitem{Additional_4}
P. C. Sofotasios, T. A. Tsiftsis, M. Ghogho, L. R. Wilhelmsson and M. Valkama, 
``The $\eta-\mu$/inverse-Gaussian Distribution: A novel physical multipath/shadowing fading model,''
\emph{in IEEE ICC '13}, Budapest, Hungary, June 2013.

\bibitem{Additional_5}
P. C. Sofotasios, and S. Freear, 
``The $\alpha-\kappa-\mu$/gamma composite distribution: A generalized non-linear multipath/shadowing fading model,''
\emph{IEEE INDICON  `11}, Hyderabad, India, Dec. 2011.

\bibitem{Additional_6}
P. C. Sofotasios, and S. Freear,
``The $\alpha-\kappa-\mu$ extreme distribution: characterizing non linear severe fading conditions,'' 
\emph{ATNAC `11}, Melbourne, Australia, Nov. 2011.

\bibitem{Additional_7}
P. C. Sofotasios, and S. Freear, 
``The $\eta-\mu$/gamma and the $\lambda-\mu$/gamma multipath/shadowing distributions,'' 
\emph{ATNAC  `11}, Melbourne, Australia, Nov. 2011.

\bibitem{Additional_8}
P. C. Sofotasios, and S. Freear, 
``On the $\kappa-\mu$/gamma composite distribution: A generalized multipath/shadowing fading model,'' 
\emph{IEEE IMOC `11},  Natal, Brazil, Oct. 2011, pp. 390$-$394.

\bibitem{Additional_9}
P. C. Sofotasios, and S. Freear, 
``The $\kappa-\mu$/gamma extreme composite distribution: A physical composite fading model,''
\emph{IEEE WCNC  `11},  Cancun, Mexico, Mar. 2011, pp. 1398$-$1401.

\bibitem{Additional_2}
P. C. Sofotasios, and S. Freear,
``The $\kappa-\mu$/gamma composite fading model,''
\emph{IEEE ICWITS  `10}, Honolulu, HI, USA, Aug. 2010, pp. 1$-$4.

\bibitem{Additional_1}
P. C. Sofotasios, and S. Freear, 
``The $\eta-\mu$/gamma composite fading model,''
\emph{IEEE ICWITS `10}, Honolulu, HI, USA, Aug. 2010, pp. 1$-$4.

\bibitem{Paschalis_6} 
K. Ho-Van, P. C. Sofotasios, 
``Outage behaviour of cooperative underlay cognitive networks with inaccurate channel estimation,'' 
\emph{in IEEE ICUFN '13}, Da Nang, Vietnam, July 2013, pp. 501$-$505.

\bibitem{Paschalis_7} 
K. Ho-Van, and P. C. Sofotasios, 
``Exact BER analysis of underlay decode-and-forward multi-hop cognitive networks with estimation errors,''
\emph{IET Communications}, vol. 7, no. 18, pp. 2122$-$2132, Dec. 2013. 

\bibitem{Maged}  
P. L. Yeoh, M. Elkashlan, T. Q. Duong, N. Yang, and D. B. da Costa,
``Transmit antenna selection for interference management in cognitive relay networks,''
\emph{IEEE Trans. Veh. Technol.}, vol. 63, no. 7, pp. 3250${-}$3262,  Sep. 2014. 

\bibitem{Paschalis_8}  
P. C. Sofotasios, M. Fikadu, K. Ho-Van,  M. Valkama, 
``Energy Detection Sensing of Unknown Signals over Weibull Fading Channels,''
\emph{ in Proc. IEEE ATC `13}, HoChiMinh City, Vietnam,  Oct. 2013, pp. 414$-$419.  

\bibitem{Paschalis_4} 
K. Ho-Van, P. C. Sofotasios, S. Freear, 
``Underlay cooperative cognitive networks with imperfect Nakagami$-m$ fading channel information and strict transmit power constraint: interference statistics and outage probability analysis,''
\emph{ IEEE/KICS Journal of Communications and Networks}, vol. 16, no. 1, pp. 10$-$17, Feb. 2014.

\bibitem{Paschalis_2} 
P. C. Sofotasios, M. K. Fikadu, K. Ho-Van, M. Valkama, and G. K. Karagiannidis, 
``The area under a receiver operating characteristic curve over enriched multipath fading conditions," in IEEE Globecom `14, Austin, TX, USA, Dec. 2014,  pp. 3090$-$3095.

\bibitem{Paschalis_5} 
K. Ho-Van, P. C. Sofotasios, 
``Bit error rate of underlay multi-hop cognitive networks in the presence of multipath fading,''
\emph{in IEEE ICUFN '13}, Da Nang, Vietnam, July 2013, pp. 620$-$624.

\bibitem{Hu}
Q. Li, and H. Hu,
``Analysis of energy detection over double-Rayleigh fading channel,"
\emph{IEEE $14^{\rm th}$ ICCT `12}, pp. 61${-}$66, China, Nov. 2012. 

\bibitem{Frolik}
D. Matolak, and J. Frolik,
``Worse-than-Rayleigh fading: Experimental results and theoretical models,"
\emph{IEEE Commun. Mag.},  vol. 49, no. 4, pp. 140${-}$146, Apr. 2011.

 \bibitem{Karagiannidis}  
G. K. Karagiannidis, N. Sagias, and P. Mathiopoulos, 
``N$*$Nakagami: A novel stochastic model for cascaded fading channels," 
 \emph{IEEE Trans. Commun.}, vol. 55, no. 8, pp. 1453${-}$1458,  Aug. 2007.

\bibitem{Zhan}
X. Zhan and Z. Li, ``Spectrum sensing based on energy detection in keyhole channel," \emph{2nd Int. Conf. on Industrial and Information Systems (IIS '10)}, vol.2, no., July 2010,  pp. 1$-$3. 

  \bibitem{Felice}
 M. D. Felice, R. Doost-Mohammady, K. R. Chowdhury, and I. Bononi, ``Smart radios for smart vehicles: cognitive vehicular networks," \emph{IEEE Vehic. Tech. Mag.}, vol.7, no.2, pp. 26$-$33, June 2012.

\bibitem{Ilhan}
 H. Ilhan,
``Analysis of energy detection over cascaded Nakagami$-m$ fading channels,"
 \emph{In Proc. MTAR `14}, Malaysia, Sep. 2014,  pp. 20$-$26. 
   
 \bibitem{Wolfram}
Wolfram Research, Inc. (2014). The Wolfram functions site. [Online].

 \bibitem{Paschalis_3}
P. C. Sofotasios, T. A. Tsiftsis, Yu. A. Brychkov, S. Freear, M. Valkama, and G. K. Karagiannidis, ``Analytic expressions and bounds for special functions and applications in communication theory,"
\emph{ IEEE Trans. Inf. Theory,} vol. 60, no. 12, pp. 7798$-$7823, Dec. 2014. 

\bibitem{Yilmaz}
I. S. Ansari, S. Al-Ahmadi, F. Yilmaz, M.-S. Alouini, and H. Yanikomeroglu,
``A new formula for the BER of binary modulations with dual-branch selection over generalized-K composite fading channels,"
\emph{IEEE Trans. Commun.},  vol. 59, no. 10, pp. 2654${-}$2658, Apr. 2011.

\end{document}